\newtheorem{proposition}{Proposition}
\newtheorem{remark}{Remark}
\newcommand{\ldn}[1]{{\color{black}#1}}
\newcommand{\zak}[1]{{\color{cyan}#1}}
\newcommand{\new}[1]{\textcolor{black}{#1}}\newcommand{\old}[1]{\iffalse {#1} \fi}
\title{\LARGE \bf Stable Linear Subspace Identification: A Machine Learning Approach}
\author{Loris Di Natale,$^\dagger$ Muhammad Zakwan,$^\dagger$ Bratislav Svetozarevic, \\Philipp Heer, Giancarlo Ferrari-Trecate, Colin N. Jones
\thanks{This research was supported by the Swiss National Science Foundation under NCCR Automation, grant agreement 51NF40\_180545.}
\thanks{L. Di Natale, B. Svetozarevic, and P. Heer are with the Urban Energy Systems Laboratory, Swiss Federal Laboratories for Materials Science and Technology (Empa), 8600 D\"{u}bendorf, Switzerland. L. Di Natale, M. Zakwan, G. Ferrari-Trecate, and C.N. Jones are with the Laboratoire d'Automatique, Swiss Federal Institute of Technology Lausanne (EPFL), 1015 Lausanne, Switzerland.}
\thanks{$^\dagger$ L. Di Natale and M. Zakwan contributed equally to this work. Corresponding author: L. Di Natale, {loris.dinatale@alumni.epfl.ch}.}}
\begin{document}

\maketitle
\thispagestyle{empty}
\pagestyle{empty}

\begin{abstract}

Machine Learning (ML) and linear System Identification (SI) have been historically developed independently. In this paper, we leverage well-established ML tools --- especially the automatic differentiation framework --- to introduce SIMBa, a family of discrete linear multi-step-ahead state-space SI methods using backpropagation. SIMBa relies on a novel Linear-Matrix-Inequality-based free parametrization of Schur matrices to ensure the stability of the identified model. 

We show how SIMBa generally outperforms traditional linear state-space SI methods, and sometimes significantly, although at the price of a higher computational burden. This performance gap is particularly remarkable compared to other SI methods with stability guarantees, where the gain is frequently above $25\%$ in our investigations, hinting at SIMBa's ability to simultaneously achieve state-of-the-art fitting performance and enforce stability. Interestingly, these observations hold for a wide variety of input-output systems and on both simulated and real-world data, showcasing the flexibility of the proposed approach. We postulate that this new SI paradigm presents a great extension potential to identify structured nonlinear models from data, and we hence open-source SIMBa on \href{https://github.com/Cemempamoi/simba}{https://github.com/Cemempamoi/simba}.


\end{abstract}

\section{Introduction}

While linear System Identification (SI) matured decades ago~\cite{ljung1998system}, Machine Learning (ML) only rose to prominence in recent years, especially following the explosion of data collection and thanks to unprecedented computational power. In particular, large Neural Networks (NNs) have shown impressive performance on a wide variety of tasks~\cite{goldberg2022neural,degrave2022magnetic}, leading to the recent boom of Deep Learning (DL) applications~\cite{lecun2015deep}. A key factor behind these successes has been the availability of efficient open-source libraries greatly accelerating the deployment of NNs, such as \texttt{PyTorch} and \texttt{TensorFlow} in Python. In particular, Automatic Differentiation (AD), at the core of the backpropagation algorithm~\cite{werbos1974beyond}, the backbone of NN training, nowadays benefits from extremely efficient implementations.

Given their effectiveness at grasping complex nonlinear patterns from data, NNs have recently been used for nonlinear system identification, where traditional SI methods struggle to compete~\cite{ljung2020deep, forgione2021continuous, brunton2022data}. NNs can be leveraged to create deep state-space models~\cite{gedon2021deep}, deep subspace encoders~\cite{beintema2023deep}, or deep autoencoders~\cite{masti2021learning}, for example. 
While applying NNs to identify nonlinear systems can achieve good performance, it can underperform for linear systems, where methods assuming model linearity might achieve better accuracy~\cite{gedon2021deep}. 

Although nonlinear SI has attracted a lot of attention in the last years, the identification of Linear Time Invariant (LTI) models is, however, still of paramount importance to many applications. Indeed, linear models come with extensive theoretical properties~\cite{rugh1996linear} and lead to convex optimization problems when combined with convex cost functions in a Model Predictive Controller (MPC)~\cite{muske1993model}, for example. Moreover, to date, numerous industrial applications still rely on the availability of linear models to conduct simulations, perform perturbation analysis, or design robust controllers following classical model-based techniques, such as $\mathcal{H}_2$, $\mathcal{H}_\infty$, and $\mu$-synthesis~\cite{khalil1996robust}.

In this work, we show how one can leverage ML tools --- backpropagation and unconstrained Gradient Descent (GD) --- for the identification of stable linear models, presenting a novel toolbox of \textbf{S}ystem \textbf{I}dentification \textbf{M}ethods leveraging \textbf{Ba}ckpropagation (SIMBa). Our research is related to the efforts in~\cite{schoukens2021improved, schulze2022data,drgovna2021deep}, where backpropagation was also used to identify LTI state-space models but only as part of specific frameworks and without considering stability constraints.

    \subsection{Subspace identification for linear systems}

State-of-the-art implementations of linear state-space SI often rely on subspace identification, such as the acclaimed MATLAB system identification toolbox~\cite{ljung1995system} or the \texttt{SIPPY} Python package~\cite{armenise2018open}. Both of them provide the three \textit{traditional} Subspace Identification Methods (SIMs), namely N4SID, MOESP, and CVA~\cite{qin2006overview}. 

Next to these traditional methods, \texttt{SIPPY} also proposes an implementation of \textit{parsimonious} SIMs (PARSIMs), namely  PARSIM-S~\cite{qin2005novel}, \text{PARSIM-P}~\cite{qin2003parallel}, and PARSIM-K~\cite{pannocchia2010predictor}, which enforce causal models by removing non-causal terms. While the former two methods do not work with closed-loop data since they assume no correlation between the output noise and the input, PARSIM-K was specifically designed to alleviate this assumption.

    \subsection{Enforcing stability}

In practice, when the true system is known to be stable, one usually requires the identified model to also be stable~\cite{manchester2021contraction}. 
In the case of traditional methods, a stability test can be done \textit{a posteriori} 
and a stable 
model can always be recovered from the extended observability matrix~\cite{maciejowski1995guaranteed}, for example, but at the cost of a sometimes significant performance loss~\cite{armenise2018open}. Note that such a correction cannot be implemented for PARSIM methods. 

Apart from such \textit{post-hoc} modifications to ensure stability, one can also modify the Least Squares (LS) estimation at the heart of many identification procedures, either introducing custom weighting factors~\cite{huang2016learning} or rewriting it as a constrained optimization problem~\cite{boots2007constraint, lacy2003subspace}.

Alternatively, one can leverage parametrizations of stable matrices, such as the ones proposed in~\cite{gillis2020note, jongeneel2022efficient}, and then use projected gradients to approximate the LS solution while ensuring the resulting model remains stable at each step, for example~\cite{mamakoukas2023learning}. A similar idea was utilized in~\cite{drgovna2021physics}, where the Perron-Frobenius theorem was leveraged to bound the eigenvalues of $A$ and hence ensure the system remains stable \textit{at all times}, even during the learning phase.

Finally, instead of directly constraining the state-space matrices, one can also simultaneously learn a model and a corresponding Lyapunov function for it, typically NN-based, thereby ensuring its stability by design~\cite{kolter2019learning}. This approach presents the advantage of naturally extending to nonlinear SI, contrary to all the others, but comes with a significant computational burden. Note that while SIMBa does not explicitly learn a Lyapunov function, it implicitly defines one to guarantee stability (see Section \ref{sec:theory}). However, instead of learning it with an NN, we leverage Linear Matrix Inequalities (LMIs) to parametrize Schur matrices, inspired from~\cite{revay2023recurrent}.

    \subsection{Contribution}

This paper introduces SIMBa, a linear SI toolbox that guarantees the stability of the identified model through a novel free parametrization of Schur matrices. Leveraging ML tools, it is able to minimize the multi-step-ahead prediction error to improve upon the performance of traditional SI methods. Our main contributions can be summarized as follows:
\begin{itemize}
    \item We propose SIMBa, a framework leveraging backpropagation and unconstrained GD for stable linear SI.
    \item We present a ready-to-use open-source Python implementation of SIMBa with a MATLAB interface. SIMBa is system-agnostic: it can seamlessly identify multi-input-multi-output systems, optimize different multi-steps-ahead performance metrics, deal with missing data and multiple trajectories, and it comes with smooth Graphical Processing Unit (GPU)-integration.
    \item We conduct an extensive empirical investigation of SIMBa's potential, exemplifying its flexibility and ability to attain state-of-the-art performance in various contexts, both on simulated and real-world data. \new{Overall, we often observe an improvement of over $25\%$ compared to the benchmark methods.}
\end{itemize}
Altogether, we propose a new paradigm for the identification of stable linear models. While it comes with an additional computational burden, SIMBa very often outperforms traditional methods in our experiments. Furthermore, it presents interesting extension potential, for example, to integrate prior knowledge about the system, include tailored nonlinearities, similarly to what was done in~\cite{zakwan2022physically}, or facilitate stable Koopman-based approaches like~\cite{loyakoopman, schulze2022data}.


\textit{Organization:} After a few preliminaries in Section~\ref{sec:preliminaries}, Section~\ref{sec:theory} presents the novel free parametrization of stable matrices used in SIMBa. Section~\ref{sec:simba} then introduces the SIMBa toolbox and Section~\ref{sec:results} provides empirical performance analyses. Finally, Section \ref{sec:conclusion} concludes the paper.

  \section{Preliminaries}
\label{sec:preliminaries}
Throughout this work, we are interested in the identification of discrete-time linear systems of the form
\begin{subequations}\label{eq:sys}
\begin{align}
    x_{k+1} &= Ax_k + Bu_k + w_k\label{eq:sys1}\\
    y_k &= Cx_k + Du_k + v_k \;, \label{eq:sys2}
\end{align}
\end{subequations}
where $x\in\mathbb{R}^n$ represents the state of the system, $u\in\mathbb{R}^m$ its input, $y\in\mathbb{R}^p$ its output, and $w_k \in \mathbb{R}^n$ and $v_k \in \mathbb{R}^p$ process and measurement noise, respectively. In general, state-space SI methods identify the unknown matrices $A\in\mathbb{R}^{n \times n}$, $B\in\mathbb{R}^{n \times m}$, $C\in\mathbb{R}^{p \times n}$, and $D\in\mathbb{R}^{p \times m}$ from a data set $\mathcal{D} = \{\left(u(0), y(0)\right), ..., \left(u(l_s), y(l_s)\right)\}_{s=1}^N$ of $N$ input-output measurement trajectories $s$ of length $l_s$. 

In our experiments, we split the data into a training, a validation, and a test set of trajectories respectively denoted $\mathcal{D}_{\textit{train}}$, $\mathcal{D}_{\textit{val}}$, and $\mathcal{D}_{\textit{test}}$. The first set is used to fit the model, the second one to assess when to stop the algorithm so as to avoid overfitting the training data, and the third one to measure the performance of the identified model on unseen data, as classically done in ML pipelines~\cite{lones2021avoid}.

Since the stability of identified models is often crucial, we are particularly interested in stable-by-design identification procedures. For systems of the form~\eqref{eq:sys}, stability is equivalent to the matrix $A$ being Schur, i.e., ensuring that all its eigenvalues $\lambda_i(A)$ be of magnitude smaller than one:
\begin{equation*}
    |\lambda_i(A)| < 1, \forall i=1,...,n \;.
\end{equation*}

\textit{Notation:} Given a symmetric matrix $F\in\mathbb{R}^{q\times q}$, $F\succ 0$ means $F$ is positive definite. For a matrix $V\in\mathbb{R}^{2q \times 2q}$, we define its block components $V_{11}, V_{12}, V_{21}, V_{22}\in\mathbb{R}^{q \times q}$ as:
\begin{equation}
    \begin{bmatrix}
        V_{11} & V_{12} \\ V_{21} & V_{22}
    \end{bmatrix}
    := V \;.
\end{equation}


    \section{Free Parametrization of Schur Matrices}
    \label{sec:theory}

To run unconstrained GD in the search space and hence take full advantage of \texttt{PyTorch} without jeopardizing stability, we need to construct $A$ matrices that are Schur \textit{by design}.
The following proposition provides such an LMI-based free parametrization of Schur matrices, inspired by~\cite{revay2023recurrent}.
\begin{proposition}\label{prop:generic}
    For any $W\in\mathbb{R}^{2n \times 2n}$, $V\in\mathbb{R}^{n \times n}$, $\tilde{\epsilon}\in\mathbb{R}$, and given $0< \gamma \leq 1$, let 
    \begin{equation}
    S:=W^\top W + \epsilon\mathbb{I}_{2n} \;, \label{eq:S generic}
    \end{equation}
    for $\epsilon = \exp{(\tilde{\epsilon})}$. Then
    \begin{equation} 
        A = S_{12} \left[\frac{1}{2}\left(\frac{S_{11}}{\gamma^2} + S_{22}\right) + V - V^\top\right]^{-1} \label{eq:A generic}
    \end{equation}
    is Schur stable with 
    $|\lambda_i(A)|<\gamma, \forall i=1,...,n$.
\end{proposition}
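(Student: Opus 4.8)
The plan is to work directly with the spectral characterization of Schur stability rather than exhibiting an explicit Lyapunov matrix, since the eigenvalue route keeps the dependence on $\gamma$ transparent. First I would record the two structural facts that make the construction well posed. Because $\epsilon = \exp(\tilde\epsilon) > 0$, the matrix $S = W^\top W + \epsilon \mathbb{I}_{2n}$ in~\eqref{eq:S generic} is symmetric positive definite, so its diagonal blocks $S_{11}$ and $S_{22}$ are themselves positive definite. Writing $M := \frac{1}{2}\left(\frac{S_{11}}{\gamma^2} + S_{22}\right) + V - V^\top$ for the bracketed factor in~\eqref{eq:A generic}, its symmetric part $\frac{1}{2}(M + M^\top) = \frac{1}{2}\left(\frac{S_{11}}{\gamma^2} + S_{22}\right)$ is positive definite, hence $M$ is invertible and $A = S_{12} M^{-1}$ is well defined.

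Next I would fix an arbitrary eigenpair of $A$, say $A v = \lambda v$ with $v \neq 0$ and $\lambda \in \mathbb{C}$, and reduce the eigenvalue equation to a balanced form in the coordinates of $S$. Setting $w := M^{-1} v \neq 0$, the relation $A = S_{12} M^{-1}$ turns $Av = \lambda v$ into $S_{12} w = \lambda M w$. The core of the argument is then to extract information from $S \succ 0$ by testing it against the vector $\begin{bmatrix} \alpha w \\ w \end{bmatrix}$ for a free scalar $\alpha \in \mathbb{C}$; this is a congruence of $S$, so the associated Hermitian form is positive for every $\alpha$, and minimizing over $\alpha$ yields the strict Cauchy--Schwarz-type bound $|w^* S_{12} w|^2 < (w^* S_{11} w)(w^* S_{22} w)$. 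This is the only place where positive definiteness of $S$ is genuinely used.

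Finally I would substitute the eigenvalue relation into this bound. Putting $p := w^* S_{11} w > 0$ and $q := w^* S_{22} w > 0$, and noting that $w^* (V - V^\top) w$ is purely imaginary because $V - V^\top$ is real and skew-symmetric, the relation $S_{12} w = \lambda M w$ gives $w^* S_{12} w = \lambda\left[\frac{1}{2}\left(\frac{p}{\gamma^2} + q\right) + i\kappa\right]$ for some $\kappa \in \mathbb{R}$. Inserting this into $|w^* S_{12} w|^2 < pq$ and discarding the nonnegative $\kappa^2$ contribution leaves $|\lambda|^2 < \frac{4pq}{\left(\frac{p}{\gamma^2} + q\right)^2}$, after which the weighted AM--GM inequality $\left(\frac{p}{\gamma^2} + q\right)^2 \geq \frac{4pq}{\gamma^2}$ collapses the right-hand side to exactly $\gamma^2$, so $|\lambda| < \gamma$ for every eigenvalue. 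I expect the main obstacle to be bookkeeping the complex and skew-symmetric parts correctly: one must see that the imaginary contribution $\kappa$ only helps, since it enlarges the denominator and can be dropped, and that the $1/\gamma^2$ weighting of $S_{11}$ is precisely what tunes the AM--GM step to return $\gamma$ instead of $1$. A Lyapunov reading of the same computation --- certifying $\gamma^2 P - A^\top P A \succ 0$ for a $P$ assembled from the blocks of $S$ --- would recover the implicitly defined Lyapunov function alluded to earlier, but the eigenvalue argument above is shorter and self-contained.
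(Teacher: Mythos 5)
Your argument is correct, but it takes a genuinely different route from the paper. The paper works at the level of matrix inequalities: it invokes the pole-placement LMI of Chilali and Gahinet for $|\lambda_i(A)|<\gamma$, takes a Schur complement, introduces a slack multiplier $G$ through the congruence $T=[I,\,-A]$ to arrive at the block LMI~\eqref{eq:generic LMI}, and then simply forces that block matrix to equal the manifestly positive definite $S$ of~\eqref{eq:S generic}, reading off $Q = S_{11}/\gamma$, $G = \frac{1}{2}\left(S_{11}/\gamma^2 + S_{22}\right) + V - V^\top$, and $A = S_{12}G^{-1}$. You instead bypass the LMI machinery: you fix an eigenpair, reduce it to $S_{12}w = \lambda M w$, and use the positive definiteness of $S$ only through its compression onto the subspace spanned by $(w,0)$ and $(0,w)$, i.e., the strict Cauchy--Schwarz bound $|w^* S_{12} w|^2 < (w^* S_{11} w)(w^* S_{22} w)$, before closing with weighted AM--GM. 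Your steps all check out: the symmetric part of $M$ is $\frac{1}{2}\left(S_{11}/\gamma^2 + S_{22}\right)\succ 0$ (diagonal blocks of a positive definite matrix are positive definite), so $M$ is invertible and $A$ is well defined; $w^*(V-V^\top)w$ is purely imaginary and can only enlarge $|w^* M w|$, so discarding $\kappa^2$ is legitimate; and $\left(p/\gamma^2+q\right)^2 \geq 4pq/\gamma^2$ collapses the bound to exactly $\gamma^2$. What the paper's route buys that yours does not: \eqref{eq:generic LMI} is an equivalence, which the subsequent Remark exploits to show the parametrization captures \emph{all} Schur matrices, and it makes the implicitly defined Lyapunov certificate $Q=S_{11}/\gamma$ explicit. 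What yours buys: it is elementary and self-contained (no external LMI theorem is needed), it proves the stated sufficiency for any $\gamma>0$, and it isolates exactly where the $1/\gamma^2$ weighting of $S_{11}$ enters --- in tuning the AM--GM step so the eigenvalue bound returns $\gamma$ rather than $1$.
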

\begin{proof}
   We want to ensure that the magnitude of all the eigenvalues of $A$ is bounded by $\gamma$. From \cite[Theorem 2.2]{chilali1996h}, we know this is the case if and only if the following LMI holds for some symmetric $Q=Q^\top \succ 0$:
\begin{equation*}
    \begin{bmatrix}
    \gamma  Q & AQ \\ QA^\top &  \gamma Q
    \end{bmatrix} \succ 0 \; .
\end{equation*}
Taking its Schur complement, this is equivalent to 
\begin{align}
    \gamma  Q - A\frac{Q}{\gamma} A^\top &\succ 0 \label{eq:gamma} \;.
\end{align}
Defining the transformation $T = [I, -A]$ and introducing a free parameter $G\in\mathbb{R}^{n \times n}$, this can be rewritten as
\begin{align}
    &\gamma  Q -AGA^\top -A^\top G^\top A \nonumber\\
    &\qquad +AGA^\top +A^\top G^\top A  - A\frac{Q}{\gamma} A^\top \succ 0 \nonumber\\
    \iff &T \begin{bmatrix}
        \gamma  Q & AG \\ G^\top A^\top & G^\top + G - \frac{Q}{\gamma} 
    \end{bmatrix} T^\top \succ 0 \nonumber\\
    \iff & \begin{bmatrix}
        \gamma  Q & AG \\ G^\top A^\top & G^\top + G - \frac{Q}{\gamma} 
    \end{bmatrix} \succ 0 \;. \label{eq:generic LMI}
\end{align}
In words, $A$ is Schur stable with eigenvalues bounded by $\gamma$ if and only if there exist $Q \succ 0$ and $G$ such that~\eqref{eq:generic LMI} holds.

Let us now parametrize the left-hand side of the above LMI by the matrix $S$ in~\eqref{eq:S generic}. Remarkably, since $S$ is positive definite by construction for any choice of $W$, \eqref{eq:generic LMI} will always be satisfied, ensuring the stability of $A$. 

Finally, define
\begin{align*}
    S_{11} &:= \gamma Q, & S_{12} &:= AG\;, \\ 
    S_{21} &:= G^\top A^\top, & S_{22} &:= G^\top + G - \frac{Q}{\gamma} \;.
\end{align*}
This allows us to recover $Q = \frac{S_{11}}{\gamma}$, which is positive definite and symmetric by construction. We then note that
\begin{equation}
    G^\top + G = \frac{Q}{\gamma} + S_{22} = \frac{S_{11}}{\gamma^2} + S_{22} \label{eq:G}
\end{equation} 
needs to hold. Since $S_{11}$ and $S_{22}$ are symmetric, \eqref{eq:G} holds for any $V \in \mathbb{R}^{n \times n}$ if we set
\begin{equation*}
    G = \frac{1}{2}\left(\frac{S_{11}}{\gamma^2} + S_{22}\right) + V - V^\top \;.
\end{equation*}
Remembering that $A = S_{12}G^{-1}$ concludes the proof.
\end{proof}

Proposition~\ref{prop:generic} implies that we can run unconstrained gradient descent on two matrices $W$ and $V$ and \textit{always} construct a Schur matrix $A$ from them as in~\eqref{eq:A generic}. Furthermore, $\gamma$ allows us to tune the magnitude of the largest eigenvalue of $A$. 


\begin{remark}
    \ldn{Let $A$ be a Schur matrix. Setting $\gamma=1$ in the above proof yields that $A$ satisfies~\eqref{eq:generic LMI} for some $Q \succ 0$ and any $G\in\mathbb{R}^{n\times n}$. By definition, there exists $\epsilon>0$ such that
    \begin{equation*}
        \begin{bmatrix}
        \gamma  Q & AG \\ G^\top A^\top & G^\top + G - \frac{Q^\top}{\gamma} 
    \end{bmatrix} - \epsilon \mathbb{I}_{2n} =: \Gamma \succ 0 \;.
    \end{equation*}
    Define $V=0$, $W=\Gamma^{\frac{1}{2}}$, and $\tilde{\epsilon}=\log{\epsilon}$. Then $A$ can be constructed as in~\eqref{eq:A generic}, with $S$ as in~\eqref{eq:S generic}, showing that the proposed parametrization captures \emph{all} Schur matrices.}
\end{remark}

 \section{A Short Introduction to SIMBa}
    \label{sec:simba}

This section briefly describes SIMBa, discussing how to use some of its most critical \texttt{parameters}. The core implementations rely on the Python \texttt{PyTorch} library, but we also provide a MATLAB interface, and the whole package can be found on \href{https://github.com/Cemempamoi/simba}{https://github.com/Cemempamoi/simba}. The default value of each \texttt{parameter} has been empirically tuned to achieve robust performance, and more details on how to use SIMBa can be found in~\cite{tech}.

    \subsection{Optimization problem}
    \label{sec:opt prblm}
    
SIMBa minimizes the multi-step-ahead prediction error: 
\begin{align}
    \min_{A,B,C,D,x_{0}^{(s)}} &\quad \frac{1}{|Z|}\sum_{s\in Z}\left[\frac{1}{l_s}\sum_{k=0}^{l_s} m_k^{(s)}\mathcal{L}_{\textit{train}} \left( y^{(s)}(k), y^{(s)}_k \right)\right] \label{eq:obj io}\\
    \text{s.t.} &\quad  y^{(s)}_k = Cx^{(s)}_k + Du^{(s)}(k) \label{eq:y}\\
                &\quad  x^{(s)}_{k+1} = Ax^{(s)}_k + Bu^{(s)}(k) \;, \label{eq:x}
\end{align}
where 
$Z\in\mathcal{D}_{\textit{train}}$ is a randomly sampled batch of trajectories, naturally handling the case where multiple training trajectories are provided, 
and $\mathcal{L}_\textit{train}$ is the training loss. Note that, if $x_0$ is known, \texttt{learn\_x0} can be disabled. Finally, $m_k^{(s)}$ is a binary variable defaulting to one, \new{where the superscript $(s)$ stands for the sampled trajectory and the subscript $k$ for the time step,} that can be used in two ways. First, $m_k^{(s)}=0$ if $y^{(s)}(k)$ does not exist in the data, giving SIMBa a natural means to deal with missing values. Second $m_k^{(s)}=0$ with probability $0\leq p<1$, where $p$ is controlled through the \texttt{dropout} parameter, which is used to regularize the training procedure. This avoids overfitting the training data by randomly dropping points along the trajectories and might also provide empirical robustness to outliers present in the data.

Throughout our experiments, we use the Mean Square Error (MSE) loss, i.e., $\mathcal{L}_\textit{train}(y,\hat{y}) = (y-\hat{y})^2$, but it could seamlessly be substituted by any other metric of interest though \texttt{train\_loss}. For instance, one could use the Mean Absolute Error (MAE), which is more robust against outliers. 

SIMBa iteratively runs GD on batches of training data in~\eqref{eq:obj io} until \texttt{max\_epochs} epochs have passed, where an \textit{epoch} represents the fact that every training trajectory has been seen once. After each epoch, we evaluate the performance of the current model on the validation set. \new{To that end,} in the experiments in Section \ref{sec:results}, we set \new{$Z=\mathcal{D}_{\textit{val}}$ and }$\mathcal{L}_\textit{val}=\mathcal{L}_\textit{train}$ in~\eqref{eq:obj io}, but custom losses can be passed through \texttt{val\_loss}. At the end of the training, we keep the model performing best on the validation data and test it on $\mathcal{D}_{\textit{test}}$. 
Note that due to the nonconvexity of~\eqref{eq:obj io}, we cannot expect SIMBa to converge to a global minimum but only a local one, rendering it typically sensitive to initializations and some hyperparameters. 


\old{\begin{remark}
    Traditional identification methods can be seen as a special case of SIMBa: they rely on the one-step-ahead MSE, i.e., they treat each sample as a separate trajectory, setting $\mathcal{Z} = \mathcal{D}_\textit{train}$, $l_s=1$, and $\mathcal{L}_\textit{train}(y,\hat{y}) = (y-\hat{y})^2$. However, the stability of $A$ cannot be ensured during learning for traditional SIMs, contrary to SIMBa, which does not require any \emph{post-hoc} correction when Proposition~\ref{prop:generic} is leveraged.
\end{remark}}

    \subsection{Initialization}
	
To accelerate the convergence of SIMBa, it is possible to initialize the state-space matrices from the best solution found either by MATLAB or \texttt{SIPPY} --- hereafter referred to as the \textit{initialization method} --- with \texttt{init\_from\_matlab\_or\_ls=True}.\footnote{Since several benchmark methods are available, SIMBa uses the one achieving the best performance on the validation data.} 

Since $A$ is usually required to be stable, relying on the construction introduced in Proposition~\ref{prop:generic}, it cannot directly be initialized with the desired matrix $A^*$ found by the initialization method. Instead, we use GD on the corresponding free parameters, i.e., we use \texttt{PyTorch} to solve
\begin{align}
    \min_{W,V} &\quad \mathcal{L}_{\textit{init}}\left(A, A^*\right) \label{eq:obj A}\\
    \text{s.t.} &\quad  A\text{ as in~\eqref{eq:A generic}} \;, \label{eq:cstnt A}
\end{align}
and initialize $A$ close to $A^*$. $\mathcal{L}_\textit{init}$ is the desired loss function, also defined as the MSE throughout this work. As mentioned before, GD on such a nonconvex objective is not guaranteed to converge, and the local optimum found through this procedure might hence induce a performance drop compared to the initialization method.

 \section{Numerical Experiments}
    \label{sec:results}

To showcase the ability of SIMBa to handle different systems and both simulated and real-world data, this Section presents an extensive performance analysis, benchmarking SIMBa with existing state-of-the-art linear state-space SI methods. Our results demonstrate the efficiency of the proposed multi-step SI approach leveraging Proposition~\ref{prop:generic}, showing that it can enforce stability without sacrificing accuracy.

Throughout our analyses, we compare SIMBa to standard state-space methods implemented either in the Python-based \texttt{SIPPY} library~\cite{armenise2018open} or the MATLAB SI toolbox~\cite{ljung1995system}. Note that \texttt{SIPPY} assumes that $x_0=0$, so that results between both implementations are comparable in this setting, 
which is the case in Section~\ref{sec:res random}. When dealing with real-world input-output data in Section~\ref{sec:res io}, however, enforcing $x_0=0$ leads to suboptimal performance, and we compare SIMBa with MATLAB's performance when $x_0$ is estimated.\footnote{The code and data used for this paper can be found on \href{https://gitlab.nccr-automation.ch/loris.dinatale/simba-ecc}{https://gitlab.nccr-automation.ch/loris.dinatale/simba-ecc}.} \new{Note that we set `\texttt{Focus}' = `\texttt{simulation}' when using the MATLAB SI toolbox for a fair comparison with SIMBa's multi-step-ahead identification framework.}

    \subsection{Comparison using random stable models}
    \label{sec:res random}
    
To assess the performance of SIMBa on standard SI problems, we started by generating $50$ random stable discrete-time state-space models, from which we simulated one trajectory of $300$ steps, starting from $x_0=0$, for the training, validation, and testing data, respectively. For the three trajectories, each dimension of $u\in\mathbb{R}^m$ was generated as a Generalised Binary Noise (GBN) signal with a switching probability of $0.1$~\cite{armenise2018open}. \new{Before fitting the methods, w}\old{W}e then added white \old{output }noise $v\sim \mathcal{N}(0, 0.25)$ to the \new{output} training data. For this experiment, we arbitrarily chose $n=5$, $m=3$, $p=3$, and the default \texttt{parameters} for SIMBa, except for the number of epochs, increased to \numprint{50000} to ensure convergence. 

The performance of each SI method on the testing trajectory is plotted in Fig.~\ref{fig:random}, where green indicates SIMBa, blue other stable SI methods, and red PARSIMs, which cannot enforce stability. For each system, the MSE of each method was normalized with respect to the best-attained performance by any approach to generate the box plots, and the corresponding key metrics are reported in Table~\ref{tab:random}. For a better visual representation, we overlaid the corresponding clouds of points, where we added random noise on the x-axis to distinguish them better. Note that this zoomed-in plot does not show one instance where SIMBa did not converge and attained poor performance, while it discards three such instances for PARSIM-K and many points with a normalized MSE between $3$ and $7$ for the methods in blue.

\begin{figure}
    \begin{center}
    \includegraphics[width=\columnwidth]{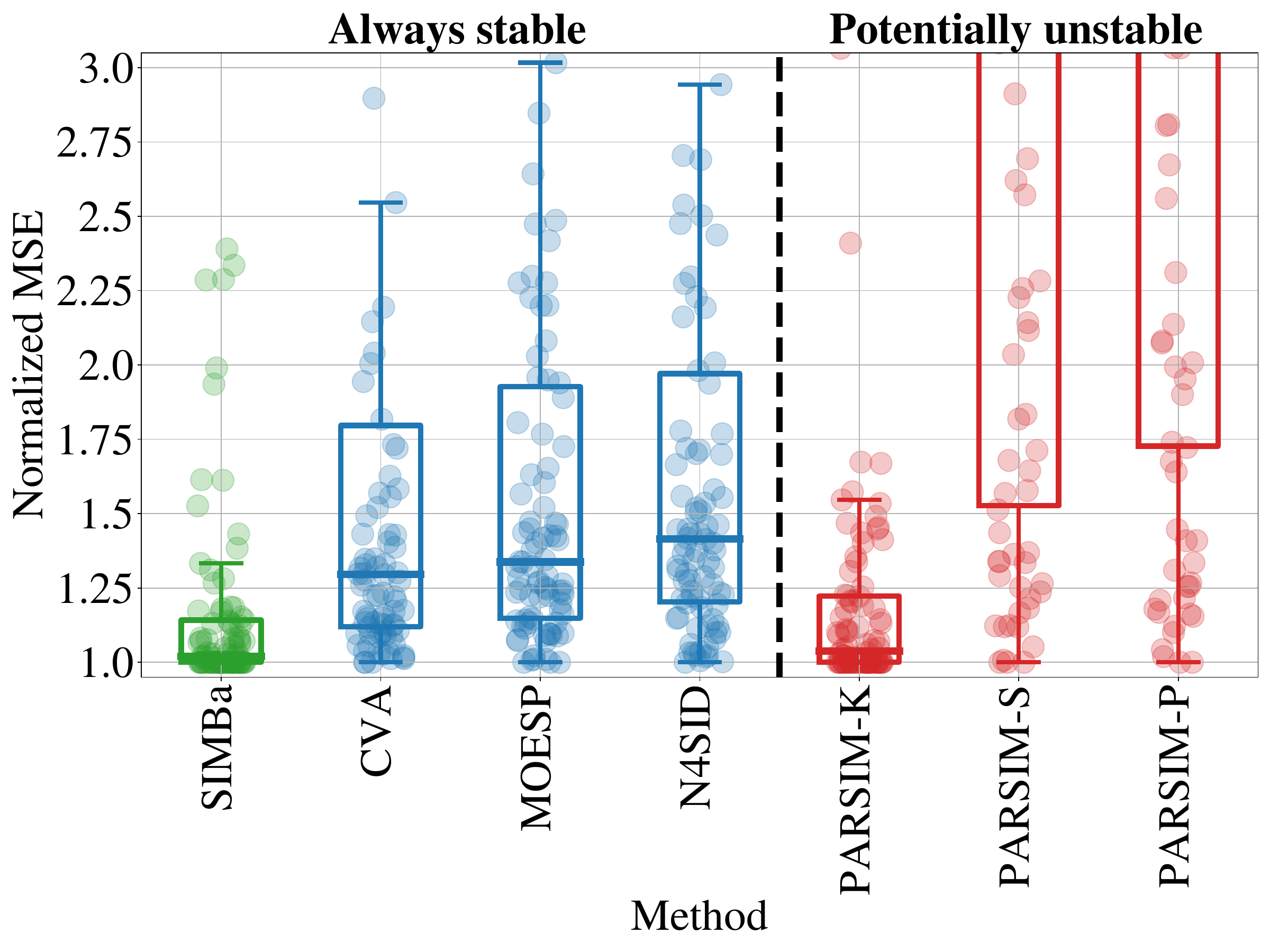}
    \caption{Performance of input-output state-space SI methods on $50$ randomly generated systems, where the MSEs have been normalized by the best-obtained error for each system. The performance of SIMBa (ours) is plotted in green, other stable SI methods in blue, while red indicates methods without stability guarantees. Key metrics are reported in Table~\ref{tab:random} for clarity.}
    \label{fig:random}
    \end{center}
\end{figure}

\begin{table}[!t]
    \caption{Normalized MSE of each method compared to the best one, reported from Fig.~\ref{fig:random}.}
    \label{tab:random}
    \centering
    \begin{tabular}{l|c|c|c}
    \hline
    \textbf{Method} & \textbf{$0.25$-quantile} & \textbf{Median} & \textbf{$0.75$-quantile} \\
    \hline
    SIMBa		&	\textbf{1.00}	&	\textbf{1.02}	&	\textbf{1.14}	\\
    CVA		&	1.12	&	1.30	&	1.80	\\
    MOESP		&	1.15	&	1.34	&	1.93	\\
    N4SID		&	1.20	&	1.42	&	1.97	\\ \hline
    PARSIM-K	&	\textbf{1.00}	&	1.04	&	1.22	\\
    PARSIM-S	&	1.53	&	3.65	&	37.5	\\
    PARSIM-P	&	1.73	&	5.37	&	84.2	\\
    \hline
    \end{tabular}
\end{table}

Overall, SIMBa shows the most robust performance, with $75\%$ of its instances achieving an error within $14\%$ of the best performance and half of them being near-optimal (see Table~\ref{tab:random}). The only method coming close is \text{PARSIM-K}, but its performance is slightly more spread out and it cannot guarantee stability. If we only look at other stable SI methods, their median accuracy is at least $30\%$ worse than the best one half of the time. In fact, their performance drop is more than $12\%$ three times out of four, compared to approximately one-fourth of the time for SIMBa, and their accuracy on the various systems is significantly more spread out. To summarize, SIMBa takes the best out of both worlds, simultaneously achieving state-of-the-art performance and stability guarantees.


    \subsection{Performance on real-world input-output data}
    \label{sec:res io}

In this Section, we leverage DAISY, a database for SI~\cite{de1997daisy}, to test our framework on real-world data. In particular, we analyze the performance of SIMBa in detail on the data collected in a \SI{120}{\mega\watt} power plant in Pont-sur-Sambre, France, where $m=5$ and $p=3$. It gathers $200$ data points with a sampling time of $\delta = 1228.8$ seconds. Here, we first standardized the input and output data so that each dimension is zero-mean and has a standard deviation of one.\footnote{Standardization generally has little impact on the performance, as analyzed in~\cite{tech}.} We use the first $100$ and $150$ samples for training and validation, respectively, and hold out the last $50$ ones for testing the final performance of the models.\footnote{\ldn{Since $x_0$ is estimated by SIMBa at training time, overlapping the training and validation data allows us to use the same initial state to validate its performance after each epoch. However, we let it run for $50$ more steps to assess its extrapolation capability and avoid overfitting the training data. For testing, we rely on MATLAB's \texttt{findstate} function to estimate $x_0$.}}


We investigate four variations of SIMBa, encoded in their names: an \textit{``i''} indicates instances with \texttt{init\_from\_matlab\_or\_ls=True}, and an \textit{``L''} that SIMBa was run for more epochs to ensure convergence. Specifically, the number of epochs with \textit{``L''} is pushed from \numprint{10000} to \numprint{20000} for \textit{SIMBa\_i} and from \numprint{25000} to \numprint{50000} otherwise. We set \texttt{dropout=0}, \texttt{learn\_x0=True} --- since it is unknown ---, and leave the other \texttt{parameters} at their default values. Since the true order of the system is unknown, one could leverage MATLAB's SI toolbox to first find the most appropriate $n$ and then run SIMBa to gain time. Here, we instead show that SIMBa dominates all the other methods from the MATLAB SI toolbox for \textit{any} choice of $n$. The PARSIM methods are not analyzed here since they all diverged for at least one value $n$ due to instability. Similarly, the stable methods from \texttt{SIPPY} achieved poor accuracy due to their assumption that $x_0=0$, which seems too restrictive for this data, and are thus omitted for clarity.

Fig.~\ref{fig:daisy} reports the MSE of the different methods on the testing data normalized by the best performance obtained by MATLAB's SI toolbox. This either corresponds to  \textit{N4SID}\footnote{Note that we set \texttt{N4Weight=\textquotesingle auto\textquotesingle} --- to automatically recover the best performance between the classical N4SID, CVA, and MOESP methods --- and \texttt{Focus=\textquotesingle simulation\textquotesingle} for a fair comparison with SIMBa, which is optimizing for the performance over the entire trajectory.} (red crosses) or the Prediction Error Method (PEM) (orange triangles). The latter aims at improving the performance of the model found by N4SID and is thus expected to perform better on the training data. Both randomly initialized versions of SIMBa, reported in black and green, were run with $10$ different seeds and the boxplot and clouds of points were generated as for Fig.~\ref{fig:random}. Since randomness has much less impact on initialized versions of SIMBa, we only report one instance of \textit{SIMBa\_i} and \textit{SIMBa\_iL} for clarity. Note that fitting a model with $n\geq 7$ on $100$ data samples is an ill-posed problem, with more parameters than data points. Interestingly, however, SIMBa still manages to outperform MATLAB in some cases in this overparametrized setting, especially when it is initialized from MATLAB's solution.

\begin{figure}
    \begin{center}
    \includegraphics[width=\columnwidth]{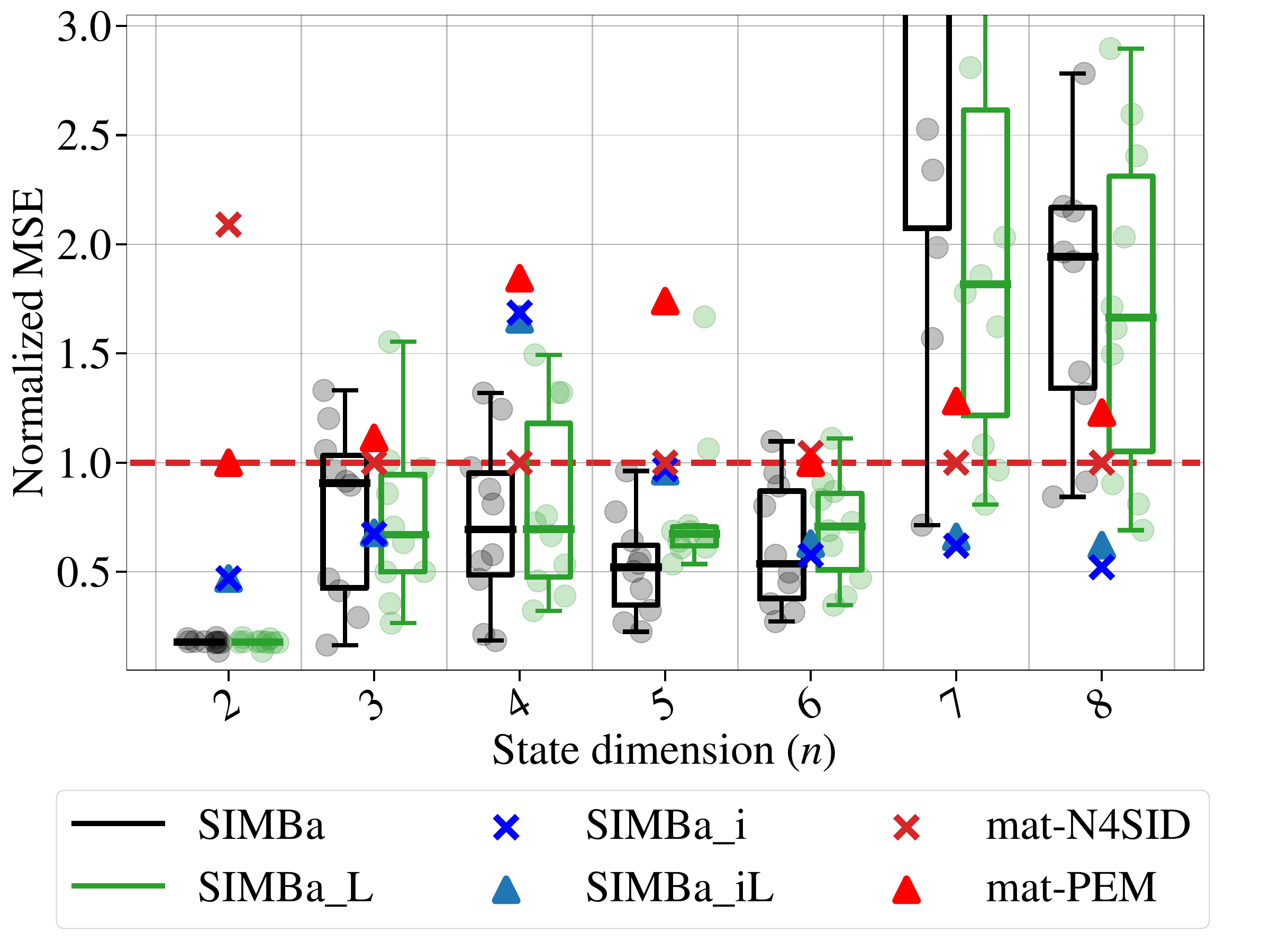}
    \caption{MSE of the different methods on the power plant test data for different choices of state dimension $n$, normalized by the best performance obtained by MATLAB's SI toolbox (red crosses and triangles). Black and green data show the performance of SIMBa over $10$ runs with random initialization for shorter (\textit{SIMBa}) and longer (\textit{SIMBa\_L}) training times, respectively. Finally, blue crosses triangles represent the performance of one initialized version of SIMBa, \textit{SIMBa\_i}, and a prolonged version \textit{SIMBa\_iL}.}
    \label{fig:daisy}
    \end{center}
\end{figure}

Impressively, SIMBa consistently attained the best performance for meaningful choices of $n\leq 6$. 
While the influence of randomness is non-negligible for randomly initialized versions, SIMBa always achieves the best accuracy, with improvements of up to $73$--$86\%$ compared to MATLAB for different choices of $n$. Furthermore, half of the time, it outperforms the latter by more than $30$--$50\%$ and $82\%$ for $n\geq 3$ and $n=2$, respectively. 

When being initialized with the solution of traditional SIMs, SIMBa always started from state-space matrices identified by MATLAB's PEM method, which achieves the best performance amongst the baselines on the validation data.\footnote{Except for $n=2$, where SIMBa was initialized from PARSIM-K.} Interestingly, SIMBa always improves PEM's performance on the testing set --- but sometimes not beyond N4SID, which attains the best accuracy amongst the baselines on this unseen data. In other words, PEM tends to overfit the training data and might start off SIMBa near a poor local minimum. While initializing with MATLAB's solution allows one to converge faster, cutting the associated computational burden, it might hence not always improve the final performance. 

    \subsection{A note on the training time}

\textit{SIMBa\_L} was run for five times more epochs than \textit{SIMBa\_i} in Section~\ref{sec:res io}, for example. Despite the small overhead required to fit $A$ during the initialization procedure in~\eqref{eq:obj A}, training \textit{SIMBa\_i} still takes approximately only $20\%$ of the time required to fit \textit{SIMBa\_L}. \ldn{In this experiment, training SIMBa ranged from $5$ to \SI{25}{\minute} \new{--- compared to a few seconds for MATLAB ---} on a MacBook Pro 2.6 GHz 6-Core Intel Core i7 laptop, irrespective of the choice of $n$. \old{Additionally, 
the training time is directly proportional to the training data length: doubling its size would double SIMBa's fitting time.} More details can be found in~\cite{tech}.}

 \section{Conclusion}
    \label{sec:conclusion}

This paper presented SIMBa, a linear state-space SI toolbox leveraging well-established ML tools to minimize the multi-step-ahead prediction error and a novel LMI-based free parametrization of Schur matrices to ensure stability. SIMBa proved to be extremely flexible, worked with various types of systems on both simulated and real-world data, and achieved impressive performance despite enforcing stability, clearly outperforming traditional stable SI methods. 

The significant average performance gains observed throughout our analyses --- ranging from approximately $25$--$30\%$ on simulated to $30$--$50\%$ on real-world data --- compared to other stable SI methods make up for the associated increased computational burden. While initializing SIMBa with the solution of traditional SI methods accelerates convergence toward meaningful solutions, it might be detrimental in the long run and get stuck in local minima. 

For future works, it would be interesting to investigate the theoretical properties of SIMBa, such as the exact data requirements needed to attain good performance or its potential integration into Koopman-based approaches. We also plan to improve the open-source toolbox towards a general tool for knowledge-grounded structured nonlinear system identification.





\bibliographystyle{IEEEtran}
\bibliography{references}

\begin{thebibliography}{10}
\providecommand{\url}[1]{#1}
\csname url@samestyle\endcsname
\providecommand{\newblock}{\relax}
\providecommand{\bibinfo}[2]{#2}
\providecommand{\BIBentrySTDinterwordspacing}{\spaceskip=0pt\relax}
\providecommand{\BIBentryALTinterwordstretchfactor}{4}
\providecommand{\BIBentryALTinterwordspacing}{\spaceskip=\fontdimen2\font plus
\BIBentryALTinterwordstretchfactor\fontdimen3\font minus \fontdimen4\font\relax}
\providecommand{\BIBforeignlanguage}[2]{{%
\expandafter\ifx\csname l@#1\endcsname\relax
\typeout{** WARNING: IEEEtran.bst: No hyphenation pattern has been}%
\typeout{** loaded for the language `#1'. Using the pattern for}%
\typeout{** the default language instead.}%
\else
\language=\csname l@#1\endcsname
\fi
#2}}
\providecommand{\BIBdecl}{\relax}
\BIBdecl

\bibitem{ljung1998system}
L.~Ljung, ``{System identification},'' in \emph{Signal analysis and prediction}.\hskip 1em plus 0.5em minus 0.4em\relax Springer, 1998, pp. 163--173.

\bibitem{goldberg2022neural}
Y.~Goldberg, \emph{{Neural network methods for natural language processing}}.\hskip 1em plus 0.5em minus 0.4em\relax Springer Nature, 2022.

\bibitem{degrave2022magnetic}
J.~Degrave, F.~Felici, J.~Buchli, M.~Neunert, B.~Tracey, F.~Carpanese, T.~Ewalds, R.~Hafner, A.~Abdolmaleki, D.~de~Las~Casas \emph{et~al.}, ``{Magnetic control of tokamak plasmas through deep reinforcement learning},'' \emph{Nature}, vol. 602, no. 7897, pp. 414--419, 2022.

\bibitem{lecun2015deep}
{LeCun, Yann and Bengio, Yoshua and Hinton, Geoffrey}, ``Deep learning,'' \emph{nature}, vol. 521, no. 7553, pp. 436--444, 2015.

\bibitem{werbos1974beyond}
P.~Werbos, ``{Beyond regression: New tools for prediction and analysis in the behavioral sciences},'' \emph{PhD thesis, Committee on Applied Mathematics, Harvard University, Cambridge, MA}, 1974.

\bibitem{ljung2020deep}
L.~Ljung, C.~Andersson, K.~Tiels, and T.~B. Sch{\"o}n, ``{Deep learning and system identification},'' \emph{IFAC-PapersOnLine}, vol.~53, no.~2, pp. 1175--1181, 2020.

\bibitem{forgione2021continuous}
M.~Forgione and D.~Piga, ``{Continuous-time system identification with neural networks: Model structures and fitting criteria},'' \emph{European Journal of Control}, vol.~59, pp. 69--81, 2021.

\bibitem{brunton2022data}
S.~L. Brunton and J.~N. Kutz, \emph{Data-driven science and engineering: Machine learning, dynamical systems, and control}.\hskip 1em plus 0.5em minus 0.4em\relax Cambridge University Press, 2022.

\bibitem{gedon2021deep}
D.~Gedon, N.~Wahlstr{\"o}m, T.~B. Sch{\"o}n, and L.~Ljung, ``{Deep state space models for nonlinear system identification},'' \emph{IFAC-PapersOnLine}, vol.~54, no.~7, pp. 481--486, 2021.

\bibitem{beintema2023deep}
G.~I. Beintema, M.~Schoukens, and R.~T{\'o}th, ``{Deep subspace encoders for nonlinear system identification},'' \emph{Automatica}, vol. 156, p. 111210, 2023.

\bibitem{masti2021learning}
D.~Masti and A.~Bemporad, ``{Learning nonlinear state--space models using autoencoders},'' \emph{Automatica}, vol. 129, p. 109666, 2021.

\bibitem{rugh1996linear}
W.~J. Rugh, \emph{{Linear system theory}}.\hskip 1em plus 0.5em minus 0.4em\relax Prentice-Hall, Inc., 1996.

\bibitem{muske1993model}
K.~R. Muske and J.~B. Rawlings, ``{Model predictive control with linear models},'' \emph{AIChE Journal}, vol.~39, no.~2, pp. 262--287, 1993.

\bibitem{khalil1996robust}
I.~Khalil, J.~Doyle, and K.~Glover, \emph{{Robust and optimal control}}.\hskip 1em plus 0.5em minus 0.4em\relax Prentice hall, 1996.

\bibitem{schoukens2021improved}
M.~Schoukens, ``{Improved initialization of state-space artificial neural networks},'' in \emph{2021 European Control Conference (ECC)}.\hskip 1em plus 0.5em minus 0.4em\relax IEEE, 2021, pp. 1913--1918.

\bibitem{schulze2022data}
J.~C. Schulze and A.~Mitsos, ``{Data-Driven Nonlinear Model Reduction Using Koopman Theory: Integrated Control Form and NMPC Case Study},'' \emph{IEEE Control Systems Letters}, vol.~6, pp. 2978--2983, 2022.

\bibitem{drgovna2021deep}
J.~Drgo{\v{n}}a, A.~Tuor, E.~Skomski, S.~Vasisht, and D.~Vrabie, ``{Deep learning explicit differentiable predictive control laws for buildings},'' \emph{IFAC-PapersOnLine}, vol.~54, no.~6, pp. 14--19, 2021.

\bibitem{ljung1995system}
L.~Ljung, \emph{{System identification toolbox: User's guide}}.\hskip 1em plus 0.5em minus 0.4em\relax Citeseer, 1995.

\bibitem{armenise2018open}
G.~Armenise, M.~Vaccari, R.~B. Di~Capaci, and G.~Pannocchia, ``{An open-source system identification package for multivariable processes},'' in \emph{2018 UKACC 12th International Conference on Control (CONTROL)}.\hskip 1em plus 0.5em minus 0.4em\relax IEEE, 2018, pp. 152--157.

\bibitem{qin2006overview}
S.~J. Qin, ``{An overview of subspace identification},'' \emph{Computers \& chemical engineering}, vol.~30, no. 10-12, pp. 1502--1513, 2006.

\bibitem{qin2005novel}
S.~J. Qin, W.~Lin, and L.~Ljung, ``{A novel subspace identification approach with enforced causal models},'' \emph{Automatica}, vol.~41, no.~12, pp. 2043--2053, 2005.

\bibitem{qin2003parallel}
S.~J. Qin and L.~Ljung, ``{Parallel QR implementation of subspace identification with parsimonious models},'' \emph{IFAC Proceedings Volumes}, vol.~36, no.~16, pp. 1591--1596, 2003.

\bibitem{pannocchia2010predictor}
G.~Pannocchia and M.~Calosi, ``{A predictor form PARSIMonious algorithm for closed-loop subspace identification},'' \emph{Journal of Process Control}, vol.~20, no.~4, pp. 517--524, 2010.

\bibitem{manchester2021contraction}
I.~R. Manchester, M.~Revay, and R.~Wang, ``Contraction-based methods for stable identification and robust machine learning: a tutorial,'' in \emph{2021 60th IEEE Conference on Decision and Control (CDC)}.\hskip 1em plus 0.5em minus 0.4em\relax IEEE, 2021, pp. 2955--2962.

\bibitem{maciejowski1995guaranteed}
J.~M. Maciejowski, ``{Guaranteed stability with subspace methods},'' \emph{Systems \& Control Letters}, vol.~26, no.~2, pp. 153--156, 1995.

\bibitem{huang2016learning}
W.-b. Huang, L.~le~Cao, F.~Sun, D.~Zhao, H.~Liu, and S.~Yu, ``{Learning Stable Linear Dynamical Systems with the Weighted Least Square Method.}'' in \emph{IJCAI}, vol. 1599, 2016, p. 1605.

\bibitem{boots2007constraint}
B.~Boots, G.~J. Gordon, and S.~Siddiqi, ``{A constraint generation approach to learning stable linear dynamical systems},'' \emph{Advances in neural information processing systems}, vol.~20, 2007.

\bibitem{lacy2003subspace}
S.~L. Lacy and D.~S. Bernstein, ``{Subspace identification with guaranteed stability using constrained optimization},'' \emph{IEEE Transactions on automatic control}, vol.~48, no.~7, pp. 1259--1263, 2003.

\bibitem{gillis2020note}
N.~Gillis, M.~Karow, and P.~Sharma, ``{A note on approximating the nearest stable discrete-time descriptor systems with fixed rank},'' \emph{Applied Numerical Mathematics}, vol. 148, pp. 131--139, 2020.

\bibitem{jongeneel2022efficient}
W.~Jongeneel, T.~Sutter, and D.~Kuhn, ``{Efficient learning of a linear dynamical system with stability guarantees},'' \emph{IEEE Transactions on Automatic Control}, 2022.

\bibitem{mamakoukas2023learning}
G.~Mamakoukas, I.~Abraham, and T.~D. Murphey, ``{Learning stable models for prediction and control},'' \emph{IEEE Transactions on Robotics}, 2023.

\bibitem{drgovna2021physics}
J.~Drgo{\v{n}}a, A.~R. Tuor, V.~Chandan, and D.~L. Vrabie, ``{Physics-constrained deep learning of multi-zone building thermal dynamics},'' \emph{Energy and Buildings}, vol. 243, p. 110992, 2021.

\bibitem{kolter2019learning}
J.~Z. Kolter and G.~Manek, ``{Learning stable deep dynamics models},'' \emph{Advances in neural information processing systems}, vol.~32, 2019.

\bibitem{revay2023recurrent}
M.~Revay, R.~Wang, and I.~R. Manchester, ``{Recurrent equilibrium networks: Flexible dynamic models with guaranteed stability and robustness},'' \emph{IEEE Transactions on Automatic Control}, 2023.

\bibitem{zakwan2022physically}
M.~Zakwan, L.~Di~Natale, B.~Svetozarevic, P.~Heer, C.~N. Jones, and G.~F. Trecate, ``{Physically consistent neural ODEs for learning multi-physics systems},'' \emph{arXiv preprint arXiv:2211.06130}, 2022.

\bibitem{loyakoopman}
K.~Loya, J.~Buzhardt, and P.~Tallapragada, ``Koopman operator based predictive control with a data archive of observables,'' \emph{ASME Letters in Dynamic Systems and Control}, pp. 1--7, 2023.

\bibitem{lones2021avoid}
M.~A. Lones, ``{How to avoid machine learning pitfalls: a guide for academic researchers},'' \emph{arXiv preprint arXiv:2108.02497}, 2021.

\bibitem{chilali1996h}
M.~Chilali and P.~Gahinet, ``{H/sub/spl infin//design with pole placement constraints: an lmi approach},'' \emph{IEEE Transactions on automatic control}, vol.~41, no.~3, pp. 358--367, 1996.

\bibitem{tech}
L.~Di~Natale, M.~Zakwan, P.~Heer, G.~Ferrari~Trecate, and C.~N. Jones, ``{SIMBa: System Identification Methods leveraging Backpropagation},'' \emph{arXiv preprint arXiv:2311.13889}, 2023.

\bibitem{de1997daisy}
B.~De~Moor, P.~De~Gersem, B.~De~Schutter, and W.~Favoreel, ``{DAISY: A database for identification of systems},'' \emph{JOURNAL A}, vol.~38, no.~4, p.~5, 1997.

\end{thebibliography}

\end{document}